\newtheorem{thm}{Theorem}
\newtheorem*{defin}{Definition}
\begin{document}
\begin{frontmatter}
\title{CAR T cells for T-cell leukemias: \\ Insights from mathematical models}

\author{V\'{\i}ctor M. P\'erez-Garc\'{\i}a}
\address{Mathematical Oncology Laboratory (MOLAB), Departamento de
Matem\'aticas, E. T. S. I.  Industriales and Instituto de Matem\'atica Aplicada a la Ciencia y la Ingenier\'{\i}a, Universidad de Castilla-La
Mancha, 13071 Ciudad Real, Spain.
\texttt{victor.perezgarcia@uclm.es} }

\author{Odelaisy Le\'on-Triana}
\address{Mathematical Oncology Laboratory (MOLAB), Departamento de
Matem\'aticas, E. T. S. I.  Industriales and Instituto de Matem\'atica Aplicada a la Ciencia y la Ingenier\'{\i}a, Universidad de Castilla-La
Mancha, 13071 Ciudad Real, Spain.
\texttt{odelaisy.leon@uclm.es} }

\author{Mar\'{\i}a Rosa}
\address{Department of Mathematics, Universidad de C\'{a}diz, Puerto Real, C\'{a}diz, Spain. \texttt{maria.rosa@uca.es}}

\author{Antonio P\'erez-Mart\'{\i}nez}
\address{Translational Research Unit in Paediatric Haemato-Oncology, Hematopoietic Stem Cell Transplantation and Cell Therapy, Hospital Universitario La Paz, Madrid, Spain and 
Paediatric Haemato-Oncology Department, Hospital Universitario La Paz, Madrid, Spain}

\date{\today}

\begin{abstract}
Immunotherapy has the potential to change the way all cancer types are treated and cured. Cancer immunotherapies use elements of the patient immune system to attack tumor cells. One of  the most successful types of immunotherapy is CAR-T cells. This treatment works by extracting patient’s T-cells and adding to them an antigen receptor allowing tumor cells to be recognized and targeted. These new cells are called CAR-T cells and are re-infused back into the patient after expansion in-vitro. 
This approach has been successfully used to treat B-cell malignancies (B-cell leukemias and lymphomas). However, its application to the treatment of T-cell leukemias faces several problems. One of these is fratricide, since the CAR-T cells target both tumor and other CAR-T cells. This leads to nonlinear dynamical phenomena amenable to mathematical modeling. 

In this paper we construct a mathematical model describing the competition of CAR-T, tumor and normal T-cells and studied some basic properties of the model and its practical implications. Specifically, we found that the model reproduced the observed difficulties for in-vitro expansion of the therapeutic cells found in the laboratory. The mathematical model predicted that CAR-T cell expansion in the patient would be possible due to the initial presence of a large number of targets. We also show that, in the context of our mathematical approach, CAR-T cells could control tumor growth but not eradicate the disease.

\end{abstract}

\begin{keyword}
Mathematical Oncology, T-cell leukemias, CAR-T cell therapies
\end{keyword}
\end{frontmatter}

\section{Introduction}

\label{intro}

Cancer immunotherapies use elements of the patient immune system to attack tumor cells. These treatments encompass different therapeutic strategies typically involving collecting a specific set of cells from patients, modifying them to produce some kind of attack on cancer cells, and reinjecting them. Some examples are tumor-infiltrating lymphocytes, engineered T-cell receptor, chimeric antigen receptor (CAR)-T cells, cytotoxic T-lymphocytes, natural killer cells, and mesenchymal stem cells \cite{Rafei2019}.

Of these, the most successful type of immunotherapy today is CAR-T cells. This treatment works by extracting patient’s T-cells and adding the CAR group to them, allowing them to recognize and target the cells carrying an antigen expressed in the tumor \cite{Feins2019}. 
The case of B-cell leukemias expressing CD19 has been particularly successful since this antigen is only expressed by B-lymphocytes and B-lymphoid leukemia cells. The clinical use of CAR-T cells engineered to recognize this antigen have led to
the full recovery of a large fraction of Acute Lymphoblastic Leukemia patients \cite{Feins2019,ALL1,Pan2017,Militou}. Good results have been reported for large B-cell lymphomas  \cite{Lymphoma1,Lymphoma2} and multiple myelomas \cite{MMyeloma}. These successes have led to the approval of CAR-T therapies directed against CD19 for treatment of acute lymphoblastic leukemias and diffuse large B-cell lymphomas \cite{Cell}. However, CAR-T cell therapies have not yet been as successful for solid tumors, for a variety of different reasons \cite{Yong2017,Moon2019}. 

Mathematical modeling has the potential to help in finding optimal administration protocols, provide a deeper understanding of the dynamics, help in the design of clinical trials and more. 
The clinical relevance of CAR-T cells has attracted the attention of applied mathematicians that have started to construct mathematical models and study different aspects of these therapies \cite{Sahoo,Baar,Kimmel,Rodrigues,Anna,Ode}. 

Given the success of CAR-T cells directed against CD19 in B-cell malignancies, new targets are being developed and tested. Specifically, there has been a lot of interest in the possibility of using CAR-T cells for the treatment of T-cell malignancies \cite{Alcantara2019,Breman2018,Fleischer2019,Menendez2019}. However there are many challenges in translating this therapy for T-cell disease. The first one is fratricide, which refers to the mutual killing of CAR T-cells. This phenomenon may prevent the generation, expansion and persistence of CAR-T cells. The second one is the prolonged and profound T-cell aplasia induced by the destruction of normal T-cells, that exposes patients to severe opportunistic infections. 
The third one is the potential contamination of CAR T-cell products with malignant T-cells. Indeed, circulating tumor T-cells are often found in the peripheral blood of patients. Because tumor T-cells may harbor the same properties as normal T-cells, they may be harvested, transduced, expanded, and infused concomitantly with normal T-cells as described recently in the context of B-cell leukemias \cite{NatureMedicine2018}. 
 Thus, developing CAR-T cells for T-cell malignancies requires avoiding contamination of the CAR-T cell product with malignant transduced T-cells \cite{Alcantara2019}. 

To the best of our knowledge, no mathematical model has yet considered CAR-T cell treatments for T-cell malignancies.

 In this paper we want to build the first minimal mathematical model describing the dynamics of tumor cells in T-cell leukemias and normal T-cells plus a population of injected CAR-T cells. Our intention is to describe the effect of the fratricide mathematically and to obtain conclusions of practical interest. This interesting phenomenon, which involves a nonlinear self-interaction within the CAR-T cell compartment will be shown to place a limit on the production of these cells in vitro. Our theoretical and simulation results support that CAR T-cells could be able to control tumor growth in vivo to a certain extent. We will show that it may not be possible to get rid of all tumor cells, but that the treatment could be useful either as a bridge treatment or as a way of making the disease chronic.

Our focus in this paper was to perform a preliminary exploration of the biological problem and obtain conclusions of practical applicability, using numerical simulations of the mathematical model as a test bed.

The structure of the paper is as follows. First, in Sec. \ref{model} we set out the mathematical model, estimate its parameters and perform a basic study of some of its properties and study the model's equilibria. Next, in Sec. \ref{results} we consider different scenarios including the generation of the CAR-T product in vitro and the in-vivo dynamics. Finally, Sec. \ref{conclu} discusses our findings and summarizes our conclusions. 

\section{The Model}\label{model}
\par

\subsection{Basic Mathematical Model}\label{Derivation}
\par

 Our mathematical model accounts for the dynamics of several cell populations: CAR-T cells $C(t)$, leukemic T-cells $L(t)$, and normal T-cells $T(t)$.  The equations describing the dynamics of these populations are
 \begin{subequations}
 \label{model1}
 \begin{eqnarray} \label{model11}
 \frac{dC}{dt} & = & \rho_C \left( T + L + C\right) C - \frac{1}{\tau_C} C - \alpha C^2 + \rho_I C, \\ \label{model12}
 \frac{dL}{dt} & = & \rho_L L - \alpha L C, \\ \label{model13}
 \frac{dT}{dt} & = & g(T,L,C)  - \alpha  T C.
 \end{eqnarray}
 \end{subequations}

  CAR-T cells,  described by Eq. (\ref{model11}), have a finite lifespan $\tau_C$ and proliferate due to stimulation by target cells (either $L(t)$ or  $T(t)$ or the CAR-T cell themselves $C(t)$). The parameter $\rho_C$ measures the stimulation of mitosis after encounters with target cells. The parameter $\alpha$ in Eq. (\ref{model11}) is a cell kill term accounting for the fratricide. It measures the probability that CAR-T cell encounters lead to the death of one of the cells. Once the CAR-T cell identifies the target cell, killing and detachment are very fast processes \cite{KillingCAR}. We consider here only serial killing, excluding multiplexed killing, which would be a less relevant process and have a different kinetics.
  
  In line with models for CAR-T cell dynamics in B-cell leukemias \cite{Ode}, we did not include a CAR-T cell death term due to encounters with target cells. The reason is that CAR-T cells do not die after killing target cells \cite{SerialKillers1,SerialKillers2}. Also, T-cells do not divide in vivo spontaneously \cite{Tough1995},  their clonal expansion being dependent on the stimulation with the target antigen, thus in vivo $\rho_I = 0$. When CAR-T cells are expanded in-vitro cytokines are added externally forcing the cells to divide, thus in that context we will assume $\rho_I \neq 0$.

Leukemic cells [Eq. (\ref{model12})] proliferate with a rate $\rho_L$ and die to the encounters with the CAR-T cells with the rate $\alpha$. 

For the normal T-cell compartment we will only consider a simplified effective description accounting for the different lineages expressing the same target antigen in an aggregate form. These cells will be assumed to be killed at a rate $\alpha$ per cell assumed to be similar to that of the other subpopulations and will be produced at a rate $g(T,L,C)$. This function is expected to depend on the total number of T-cells via cytokine signaling, on the effect of CAR-T cell on T-cell progenitors, etc. In this paper we will assume $g(T,L,C)$ to be very small and contribute only to a minimal residual level of normal T-cells that would not be relevant for the nonlinear dynamics of the system. In what follows we will take $g(T,L,C) = 0$.

Figure \ref{fig1} summarizes  the relationships between the different cell subpopulations and the assumptions behind our model.

\begin{figure}
\centering
	\includegraphics[width=\textwidth]{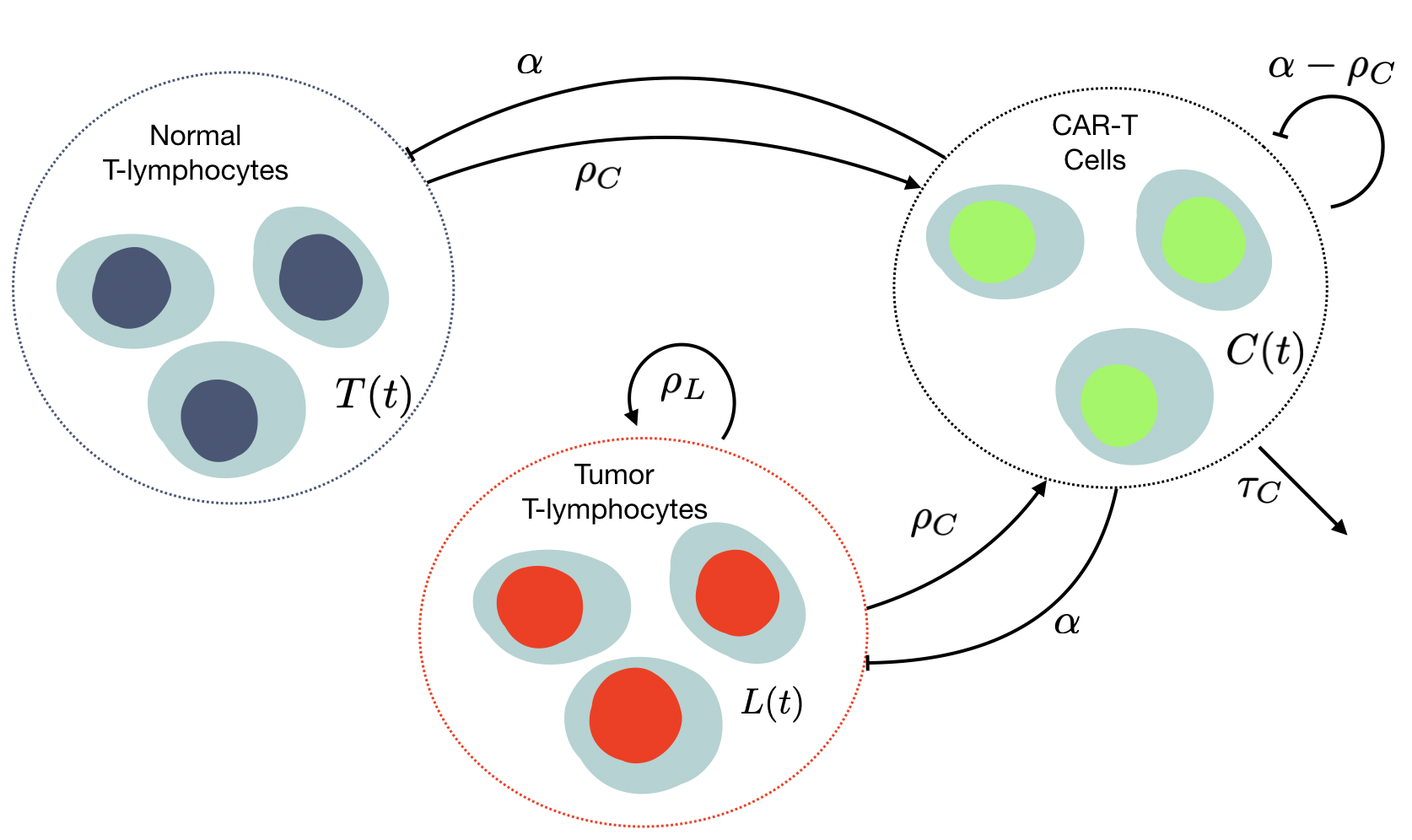}
	\caption{\textbf{Cellular populations and biological processes included in the mathematical model (\ref{model1})}. Normal $T(t)$ and leukemic $L(t)$ T-lymphocytes are killed by CAR-T cells $C(t)$ at a rate $\alpha$ and stimulate CAR-T cell proliferation at a rate $\rho_C$, both per cell. Tumor lymphocytes proliferate at a rate $\rho_L$. As a result of fratricide and self-stimulation, CAR-T cells are eliminated at a rate $\alpha-\rho_C$ per CAR-T cell. CAR-T cell finite lifetime $\tau_C$ also results in cell loss.}
\label{fig1}
\end{figure}

 \subsection{In-vitro equilibria}
 
The in-vitro expansion during the CAR-T cell production can be described by setting $L=T=0$ in Eqs. (\ref{model1}), and taking $\rho_I \neq 0$, thus
\begin{equation}\label{model2}
 \frac{dC}{dt}  = \hat{\rho}  C - \hat{\alpha} C^2.
 \end{equation}
The parameter $\hat{\rho} = \rho_I- 1/\tau_C > 0$ always in vitro and the effective cell kill rate $\hat{\alpha} = \alpha - \rho_C \geq 0$,  since the kill rate is expected to be larger than the stimulation rate due to the different speeds of the killing and replication processes. 

Eq. (\ref{model2}) is a logistic equation, that for positive initial values satisfies that 
\begin{equation}\label{expansion}
C \stackrel[t \rightarrow \infty]{}{\longrightarrow} C_* = \hat{\rho}/\hat{\alpha}.
\end{equation}
 This result is in line with the observation that CAR-T cells targeting T-cell antigens cannot be expanded beyond a certain value \cite{Breman2018}. Here we show that this value will depend on the cytokine stimulation provided and the fratricidal cell killing rate.

\subsection{Parameter estimation}
\label{parapara}

Some parameters in model Eq. (\ref{model1}) can be estimated from biological data. Firstly, the typical lifetime of activated CAR-T cells $\tau_c$  is in the range 14-28 days \cite{Ghorashian}. T-cell leukemias are typically rather aggressive tumors with small doubling times that can be estimated to be around $\rho_L$ = 1/40 day$^{-1}$ in vivo \cite{Doubling}, although chronic forms of the disease could have much smaller numbers \cite{PLL}. Finally, $\alpha$ and $\rho_C$ can be expected to be in the range of B-cell leukemias, where they have been found to be around $10^{-11}$ day$^{-1}$cell$^{-1}$ \cite{Ode}. One would expect $\rho_C$ to be of the order of or smaller than $\alpha$, since it corresponds to the number of new cells generated by each encounter of CAR-T cells with target cells.

As to the initial data, the total number of T-lymphocytes in the human body is around 10$^{11}$ and  typical tumor loads in acute T-cell leukemias can be in a similar range \cite{Bains2009}. Most CAR-T administration regimes are preceded by a lympho-depleting treatment that creates a favorable 
cytokine profile, favoring the growth of injected cells \cite{Depletion1,Depletion2}. Thus, the previous numbers are substantially reduced once the treatment is started. We will take our initial data to be around $\sim 10^{10}$ for tumor and normal T-cells.

Finally, the number of CAR-T cells injected would depend on the maximal expansion obtained in vitro, which could range from as low as $10^4$ when fratricide is present to larger numbers around $10^7$ depending on the strategies used to overcome it.

\subsection{Large initial data display unbounded dynamics}
\label{uns}

\begin{thm}
For any non-negative initial data $(C_0,L_0,T_0)$ and all the parameters of the model being positive, the solutions to Eqs. (\ref{model1}) exist for $t > 0,$ are non-negative and unique.
\end{thm}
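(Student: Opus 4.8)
The plan is to handle the three assertions—uniqueness, existence, and non-negativity—in the natural order: local existence and uniqueness first, then invariance of the non-negative octant, and finally global existence in time.

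First I would note that, with $g\equiv 0$, the right-hand side of (\ref{model1}) is a polynomial vector field in $(C,L,T)$, hence $C^\infty$ and in particular locally Lipschitz on all of $\mathbb{R}^3$. The Picard--Lindel\"of theorem then yields, for every initial datum $(C_0,L_0,T_0)$, a \emph{unique} solution on a maximal interval $[0,t_{\max})$ with $0<t_{\max}\le\infty$. This settles uniqueness and local existence at once; the remaining work is to show that the trajectory stays non-negative and that $t_{\max}=\infty$.

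For non-negativity the key structural feature is that each equation carries a factor of its own state variable. Writing $f_C=\rho_C(T+L+C)-1/\tau_C-\alpha C+\rho_I$, $f_L=\rho_L-\alpha C$ and $f_T=-\alpha C$, the system reads $\dot C=C\,f_C$, $\dot L=L\,f_L$, $\dot T=T\,f_T$, with each $f_X$ continuous along the local solution. Hence the integrating-factor representation $X(t)=X_0\exp\!\big(\int_0^t f_X(s)\,ds\big)$ holds for $X\in\{C,L,T\}$, so each component retains the sign of its initial value. Equivalently, the coordinate planes $\{C=0\}$, $\{L=0\}$, $\{T=0\}$ are invariant, and by uniqueness a trajectory starting in the closed non-negative octant cannot cross its boundary, staying there for all $t\in[0,t_{\max})$.

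The hard part will be ruling out finite-time blow-up, and it is precisely here that the self-stimulation term $\rho_C C^2$ competing with the fratricide term $-\alpha C^2$ is decisive. I would first use the non-negativity just obtained to bound the two ``feeding'' populations: from $\dot T=-\alpha TC\le 0$ one gets $0\le T(t)\le T_0$, and from $\dot L\le\rho_L L$ one gets $L(t)\le L_0\,e^{\rho_L t}$, both finite on every bounded interval. Substituting these into the $C$-equation, which rearranges as $\dot C=A(t)\,C-\hat\alpha\,C^2$ with $A(t)=\rho_C(T+L)+\rho_I-1/\tau_C$ and $\hat\alpha=\alpha-\rho_C$, and invoking the standing modelling assumption $\hat\alpha\ge 0$ introduced with (\ref{model2}), one obtains on any $[0,T^\ast]$ a logistic differential inequality $\dot C\le \bar A(t)\,C-\hat\alpha\,C^2$ with $\bar A(t)=\rho_C(T_0+L_0 e^{\rho_L t})+\rho_I-1/\tau_C$ bounded on $[0,T^\ast]$. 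Comparison with the associated logistic equation then bounds $C$ on $[0,T^\ast]$ (by $\max\{C_0,\ \sup_{[0,T^\ast]}\bar A/\hat\alpha\}$ when $\hat\alpha>0$, and by an exponential bound when $\hat\alpha=0$, the inequality being linear there). Since none of the three components escapes to infinity in finite time, the standard continuation criterion forces $t_{\max}=\infty$, which completes the proof. I would flag that this final step genuinely uses $\alpha\ge\rho_C$: taking $L_0=T_0=0$ collapses (\ref{model11}) to (\ref{model2}), whose solutions blow up in finite time when $\hat\alpha<0$, so positivity of the parameters alone does not suffice without the effective-kill-rate condition.
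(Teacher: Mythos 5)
Your proposal is correct, and it is in fact more complete than the paper's own proof. For uniqueness and non-negativity you follow essentially the same path: Picard--Lindel\"of plus the integrating-factor representation $X(t)=X_0\exp\bigl(\int_0^t f_X(s)\,ds\bigr)$ for each component, which is exactly the paper's positivity argument. Two differences are worth noting. First, your justification of local well-posedness is the right one: the vector field is quadratic, so the paper's claims of ``bounded coefficients'' and ``continuous and bounded'' partial derivatives are not literally true on $\mathbb{R}^3$; local Lipschitz continuity of a polynomial field, as you invoke it, is what the argument actually needs. Second, and more substantively, the paper never addresses continuation: its proof delivers only local existence, while the theorem asserts existence for $t>0$. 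You supply the missing step via the a priori bounds $T\le T_0$, $L\le L_0e^{\rho_L t}$ and the logistic comparison $\dot C\le\bar A(t)\,C-\hat\alpha\,C^2$, which rules out finite-time blow-up when $\hat\alpha=\alpha-\rho_C\ge 0$. Your closing caveat is also well taken and identifies a genuine flaw in the statement itself: with only ``all parameters positive'' and non-negative data, taking $L_0=T_0=0$ and $\rho_C>\alpha$ reduces (\ref{model11}) to a Riccati-type equation whose solutions blow up in finite time for $C_0>1/\bigl(\tau_C(\rho_C-\alpha)\bigr)$, so global existence requires the standing modelling assumption $\hat\alpha\ge 0$ introduced with Eq. (\ref{model2}) (note that the paper's own Theorem 2 on unbounded dynamics in the regime $\rho_C>\alpha$ tacitly presumes a global solution that may not exist). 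In short: same skeleton for uniqueness and positivity, but your version closes the global-existence gap the paper leaves open, at the cost of an extra hypothesis that is anyway assumed elsewhere in the paper.
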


\begin{proof} The ODE system (\ref{model1}) has bounded coefficients and the right-hand side of the system is a continuous function of $(C,L,T)$, thus the local existence of solutions follows from classical ODE theory. Since the partial derivatives of the velocity field are continuous and bounded, uniqueness follows from the Picard-Lindelof theorem. 

Let us rewrite Eqs. (\ref{model1}) when $g = 0$ as
 \begin{subequations}
 \label{modelx1}
 \begin{eqnarray} \label{modelx11}
 \dot{C} & = & \left[ \rho_C \left( T + L + C\right) -1/\tau_C - \alpha C\right] C, \\ \label{modelx12}
 \dot{L} & = & \left(\rho_L-\alpha C\right) L, \\
 \dot{T} & = & - \left( \alpha  C\right)T ,
 \end{eqnarray}
 \end{subequations}
then we may write 
 \begin{subequations}
 \begin{eqnarray} 
C(t)  & = & C_0 \exp \left(\int_{t_0}^t \left[ \rho_C T(t') + \rho_C L(t') + (\rho_C-\alpha)C(t') -\frac{1}{\tau_C}  \right] dt' \right) , \\
L(t) & = & L(t_0)  \exp \left(\int_{t_0}^t \left(\rho_L-\alpha C(t')\right) dt'\right), \\
T(t) & = &  T(t_0) \exp \left(- \int_{t_0}^t \alpha  C(t') dt' \right),
 \end{eqnarray}
 \end{subequations}
which leads to the positivity of solutions.

\end{proof}

\begin{defin} The sum of all cell populations studied will be denoted by $S(t)$, i.e.
$$S(t) = C(t) + T(t) + L(t).$$
\end{defin}

\begin{thm} Let $C(t), L(t), T(t)$ be solutions of Eqs. (\ref{model1}) with initial data $C(t_0) = C_0 > 0, L(t_0) = L_0 > 0, T(t_0)= T_0 > 0, S(t_0) = S_0>0$. If
\begin{description}
\item[H1] $\rho_C >\alpha$
\item[H2] $(\rho_C-\alpha) \tau_C S_0 > 1$,
\end{description}
 then $S(t)$ increases monotonically with time and 
$\lim_{t\rightarrow \infty} S(t) = \infty$.
\end{thm}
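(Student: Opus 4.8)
The plan is to track the evolution of the aggregate population $S=C+L+T$ and to show that, under H1 and H2, its time derivative stays strictly positive and fails to be integrable on $[t_0,\infty)$. First I would compute $\dot S=\dot C+\dot L+\dot T$ directly from the reduced system (\ref{modelx1}). The three kill terms $-\alpha C^2$, $-\alpha LC$ and $-\alpha TC$ collapse into $-\alpha C S$, while the stimulation term $\rho_C(T+L+C)C$ is exactly $\rho_C C S$; collecting everything yields the compact identity
\begin{equation}
\dot S = C\left[(\rho_C-\alpha)S-\frac{1}{\tau_C}\right]+\rho_L L .
\end{equation}
This identity is the backbone of the argument, and both hypotheses are tailored precisely to control the sign of its right-hand side.

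Next I would establish monotonicity by a continuity (bootstrap) argument. Writing $h(t)=(\rho_C-\alpha)S(t)-1/\tau_C$, hypothesis H2 says exactly $h(t_0)>0$, so together with $C_0,L_0>0$ and the positivity of solutions from the previous theorem we get $\dot S(t_0)>0$. The mechanism is self-reinforcing: as long as $S$ has been increasing we have $S(t)\ge S_0$, so by H1 ($\rho_C-\alpha>0$) we obtain $h(t)\ge h(t_0)>0$, and since $C(t)>0$ and $\rho_L L(t)>0$ for all $t$, the displayed identity forces $\dot S(t)>0$. Formally I would let $t^\ast$ be the supremum of times up to which $\dot S>0$, note $S(t^\ast)\ge S_0$ hence $\dot S(t^\ast)>0$, and use continuity of $\dot S$ to contradict maximality unless $t^\ast=\infty$. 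This gives strict monotonicity of $S$ on all of $[t_0,\infty)$.

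For the divergence I would argue by contradiction: suppose $S$ increases to a finite limit $S_\infty$. Then $C,L,T$ are all bounded and $\dot S$ is integrable, with $\int_{t_0}^\infty \dot S\,dt=S_\infty-S_0<\infty$. Because both summands $Ch$ and $\rho_L L$ are non-negative (using $h\ge h(t_0)>0$), each is separately integrable; and since $h$ is bounded below by the positive constant $h(t_0)$, integrability of $Ch$ forces $\int_{t_0}^\infty C(t')\,dt'<\infty$. The key move is to insert this into the explicit representation of the leukemic compartment established in the previous theorem,
\begin{equation}
L(t)=L_0\exp\left(\rho_L(t-t_0)-\alpha\int_{t_0}^t C(t')\,dt'\right),
\end{equation}
whose exponent diverges to $+\infty$ since the linear term $\rho_L(t-t_0)$ overwhelms the convergent integral; thus $L(t)\to\infty$, contradicting the boundedness of $L$. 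Hence no finite limit is possible and $S(t)\to\infty$.

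I expect the divergence step to be the main obstacle. Monotonicity is essentially sign-chasing once the identity for $\dot S$ is in hand, but ruling out a finite limit requires the non-obvious conversion of integrability of $\dot S$ into $\int C\,dt<\infty$, followed by exploiting the intrinsic exponential growth $\rho_L$ of the tumor compartment, which is held in check only by the cumulative CAR-T exposure $\alpha\int C$. The delicate points are justifying the term-by-term integrability (which needs the sign of $h$ secured beforehand) and making the bootstrap rigorous through the connectedness of the interval on which $\dot S>0$.
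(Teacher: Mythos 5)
Your proof is correct, and it takes a genuinely different route for the divergence half of the theorem. The first half matches the paper in substance: both start from the summed identity $\dot S = C\left[(\rho_C-\alpha)S-1/\tau_C\right]+\rho_L L$ and secure monotonicity by a sign-bootstrap, the only difference being cosmetic --- the paper routes the argument through a subsolution $\underline{S}$ solving $\dot{\underline{S}}=(\rho_C-\alpha)\underline{S}C-C/\tau_C$, while you work with $\dot S$ directly via the connectedness of the interval where $\dot S>0$. The divergence step is where you diverge from the paper. The paper is constructive: it first shows $C(t)>C_0$ for all $t>t_0$, using $\alpha C^2\le \alpha C(C+L+T)$ to get $\dot C>\left[(\rho_C-\alpha)S-1/\tau_C\right]C>0$, and then feeds this uniform lower bound on $C$ back into the equation for $\dot{\underline{S}}$ to produce a constant $Q_0>0$ (in effect $Q_0=\left[(\rho_C-\alpha)S_0-1/\tau_C\right]C_0$; the paper's display drops a factor of $C_0$) with $\dot S>Q_0$, hence the explicit linear lower bound $S(t)>S_0+Q_0\,t$. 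You instead argue by contradiction: if $S\uparrow S_\infty<\infty$ then $\dot S\in L^1$, the non-negativity of both summands together with $h(t)\ge h(t_0)>0$ forces $\int_{t_0}^\infty C\,dt<\infty$, and the representation $L(t)=L_0\exp\bigl(\rho_L(t-t_0)-\alpha\int_{t_0}^t C\,dt'\bigr)$ from Theorem 1 then makes $L$ blow up, contradicting boundedness. Your route buys a cleaner argument that avoids lower-bounding $C$ altogether and exposes the biological mechanism (bounded cumulative CAR-T exposure cannot restrain exponential tumor growth); note that it genuinely uses $L_0>0$, which the paper's divergence step does not need. The paper's route, in exchange, is quantitative: it yields an explicit at-least-linear growth rate rather than bare divergence. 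One shared caveat, inherited from Theorem 1 rather than a gap in your argument: both proofs presuppose global existence on $[t_0,\infty)$, while under H1 the quadratic terms could in principle drive finite-time blow-up, in which case ``$S\to\infty$'' should be read as divergence at the end of the maximal existence interval --- your bounded-case contradiction actually handles this slightly more gracefully, since boundedness of $S$ itself guarantees global extendability.
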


\begin{proof} Let us first sum the three equations Eq. (\ref{model1}) to obtain
$$ \frac{dS}{dt} = \left(\rho_C-\alpha\right) SC - \frac{1}{\tau}C + \rho_L L. $$
Then, the positivity of $L(t)$ implies that, $\underline{S}$ defined as the solution of 
\begin{equation}\label{derivi}
 \frac{d\underline{S}}{dt} = \left(\rho_C-\alpha\right) \underline{S}C - \frac{1}{\tau}C,
 \end{equation}
satisfying $\underline{S}(t_0) = S_0$ with $C(t_0) = C_0$, is a subsolution of $S(t)$, i.e. satisfying $\underline{S}(t) < S(t), \forall t>t_0$. Clearly, under our hypothesis
$$ \left.\frac{d\underline{S}}{dt}\right|_{t=t_0} = \left[\left(\rho_C-\alpha\right) \underline{S}_0 - \frac{1}{\tau}\right]C_0 > 0,$$
but then, using Eqs. (\ref{derivi}), this leads to $d\underline{S}/dt > 0$ for all $t>t_0$. Moreover, from Eq. (\ref{model11}) and using the fact that $S(t) > C(t)$, for all $t>t_0$ we get
\begin{equation}
 \frac{dC}{dt}  =  \rho_C S C - \frac{1}{\tau_C} C - \alpha C^2 >  
 \left[ \left( \rho_C-\alpha\right) S - \frac{1}{\tau_C}\right] C,
 \end{equation}
 where we have used $\rho_C S C - \frac{1}{\tau_C} C - \alpha C^2 > \rho_C S C - \frac{1}{\tau_C} C - \alpha C\left(C+L+T\right)$.

This means that $C(t) > C_0$ for any non-zero initial data, then 
\begin{equation}\label{deriva}
 \frac{d\underline{S}}{dt} > \left(\rho_C-\alpha\right) \underline{S}_0 - \frac{1}{\tau} C_0 \equiv Q_0 > 0,
 \end{equation}
then $S(t) > \underline{S}(t) > Q_0 t + S_0$, which proves the unboundedness of the total population $S(t)$, i.e. the fact that $\lim_{t\rightarrow \infty} S(t) = \infty$.
\end{proof}

Taking reasonable initial numbers (see Sec. \ref{parapara}) we will always initially be in the regime $(\rho_C-\alpha) \tau_C S_0 > 1$.

\subsection{Equilibria of the model Eqs. (\ref{model1}) and local stability analysis}
 
The equilibria of Eqs. (\ref{model1}) in the case of interest $g =0$, are given by the equations 
\begin{subequations}
  \begin{eqnarray} \label{equil11}
0 & = & \rho_C \left( T + L + C\right) C - \frac{1}{\tau_C} C - \alpha C^2, \\ \label{equil12}
0 & = & \rho_L L - \alpha L C, \\ \label{equil13}
0& = &  - \alpha  T C.
 \end{eqnarray}
 \end{subequations}
 Eq. (\ref{equil13}) leads to either $T = 0$ or $C=0$. The latter leads to $L=0$ using Eq. (\ref{equil12}) and the former to either $L=0$ or $C=\rho_L/\alpha$. Then using Eq. (\ref{equil11}) allows us to obtain the expressions for the three equilibrium points of Eqs. (\ref{model1})
 \begin{subequations}\label{equi}
 \begin{eqnarray}
 E_1 & = & (0, 0, T_*),\\
 E_2 & = & \left( \frac{\rho_L}{\alpha}, \frac{1}{\rho_C\tau_C} + \frac{\rho_L}{\rho_C}-\frac{\rho_L}{\alpha}, 0 \right). \\
 E_3 & = & \left( \frac{1}{\tau_c \left(\rho_C-\alpha\right)}, 0, 0\right).
 \end{eqnarray}
 \end{subequations}
  for any $T_*$. The Jacobian of the differential equations (\ref{model1}) is
 \begin{equation}\label{jacobo}
 J = \left( \begin{array}{ccc} 2\left(\rho_C-\alpha\right) C - 1/\tau_C + \rho_C\left(T+L\right) & \rho_C C & \rho_CC \\
 -\alpha L & \rho_L - \alpha C & 0 \\ - \alpha T & 0 & - \alpha C \end{array} \right) .
 \end{equation}
Let us now use Eq. (\ref{jacobo}) to study the local stability of the different equilibria given by Eqs. (\ref{equi}). Firstly, for $E_1$ we get
  \begin{equation}
 J\left(E_1\right) = \left( \begin{array}{ccc} \rho_C T_*-1/\tau_C & 0 & 0 \\
 0 & \rho_L & 0 \\ - \alpha T_* & 0 & 0 \end{array} \right).
 \end{equation}
The eigenvalues of $J\left(E_1\right)$ are 
\begin{equation}
\lambda_1 = 0, \lambda_2 = \rho_L, \lambda_3 = \rho_C T_* - 1/\tau_C,
\end{equation} 
thus the equilibrium point $E_1$ is unstable. For the second equilibrium point we get  
 \begin{equation}
 J\left(E_2\right) = \left( \begin{array}{ccc} \rho_L\left( \frac{\rho_C}{\alpha}-1\right) & \rho_C \rho_L/\alpha & \rho_C \rho_L/\alpha \\
 -\alpha \left(\rho_L+1/\tau_C\right)/\rho_C + \rho_L & 0 & 0 \\ 0 & 0 & - \rho_L \end{array} \right).
 \end{equation}
 Thus $\lambda_3 = - \rho_L < 0$ and $\lambda_{1,2}$ satisfy the equation
 \begin{equation}
 \lambda^2 + \lambda \left(1  - \frac{\rho_C}{\alpha}\right) \rho_L  + \rho_L^2 \left(1- 
 \frac{\rho_C}{\alpha}\right) + \frac{\rho_L}{\tau} = 0,
 \end{equation}
which leads to the eigenvalues
\begin{equation}\label{equifor}
\lambda_{\pm} = \frac{1}{2} \rho_L \left( \frac{\rho_C}{\alpha} - 1\right) \pm \frac{1}{2} D^{1/2},
\end{equation}
with the discriminant $D$ being given by
\begin{equation}
\frac{D}{\rho_L^2} = \left(1-\frac{\rho_C}{\alpha}\right)^2 - 4 \left(1-\frac{\rho_C}{\alpha}\right) -  \frac{4}{\rho_L\tau_C}.
\end{equation}
Let us consider the case $\rho_C/\alpha <1$ since the system will be unstable otherwise (Sec. \ref{uns}). Then,
 $$ 0< 1-\frac{\rho_C}{\alpha}<1.$$
Since $(1-\rho_C/\alpha)^2 < 1-\rho_C/\alpha$, we get
 $$ \frac{D}{\rho_L^2} <  - 3 \left(1-\frac{\rho_C}{\alpha}\right) -  \frac{4}{\rho_L\tau_C} < 0.$$
 Thus, the equilibrium is a stable node-focus.
 
 Finally, for $E_3$ we get
 \begin{equation}
  J(E_3) = \frac{1}{\tau_C} \left( \begin{array}{ccc}  1 & \frac{\rho_C}{\rho_C-\alpha} & \frac{\rho_C}{\rho_C-\alpha} \\
 0 & \rho_L\tau_c + \frac{\alpha}{\alpha-\rho_C} & 0 \\ 0 & 0 & \frac{\alpha}{\alpha-\rho_C} \end{array} \right).
 \end{equation}
 The eigenvalues of $J(E_3)$ are 
 \begin{subequations}
 \begin{eqnarray}
 \lambda_1 & = & 1/\tau_C > 0, \\
 \lambda_2 & = & \rho_L\tau_c + \alpha/(\alpha-\rho_C) > 0,\\
  \lambda_3 & = & \alpha/(\alpha-\rho_C) > 0,
 \end{eqnarray}
 \end{subequations}
  thus $E_3$ is an unstable node.

In conclusion, there is only one stable equilibrium point $E_2$ given by Eqs. (\ref{equi}) of node-focus type, which can be an attractor for the dynamics of the system (\ref{model1}). 
\section{Applications.}
\label{results}

\subsection{CAR-T cells allow for control of T-cell leukemias in the presence of fratricide}

To obtain further insight into the global dynamics of solutions of Eqs. (\ref{model1}) we simulated different initial data in the biologically feasible parameter and initial data regions. In all cases studied, we found an oscillatory behavior of the solutions towards the stable node-focus point $E_2$ after a fast reduction of the initial normal T-cell number. 

Figure \ref{fig2} provides a typical example of the dynamics. There we see how tumor grows for a short time, typically 10-15 days, while CAR-T cells expand. The CAR-T cell expansion persists over more than four orders of magnitude in cell number (Figure \ref{fig2}(c)), with a peak at about 15 days after injection (Figure \ref{fig2}(b)). This leads to a substantial decrease of the tumor load and T-cell aplasia (Figure \ref{fig2}(a,b)). For this parameter set, tumor was not controlled for long periods of times and relapse was noticeable a few months after the injection date of CAR-T cells. 
After relapse oscillations of leukemic and CAR-T cells are observed in their course towards the equilibrium, in this case corresponding to $3\times 10^8$ CAR-T cells and $2.7\times 10^9$ tumor cells. Interestingly, the number of tumor cells in this case is one order of magnitude smaller than the initial tumor load ($2 \times 10^{10}$ cells), which supports the possibility of CAR-T cells effectively controlling tumor  to clinically acceptable levels.

The numerical results used to construct Figure \ref{fig2}(c), show that in less than two months after injection, treatment was able to reduce tumor load from the initial level of $2\times 10^{10}$ cells down to a minimum level of $2.59 \times 10^{8}$ cells, i.e. a decrease of about two orders of magnitude.

 \begin{figure}
\centering
	\includegraphics[width=0.65\textwidth]{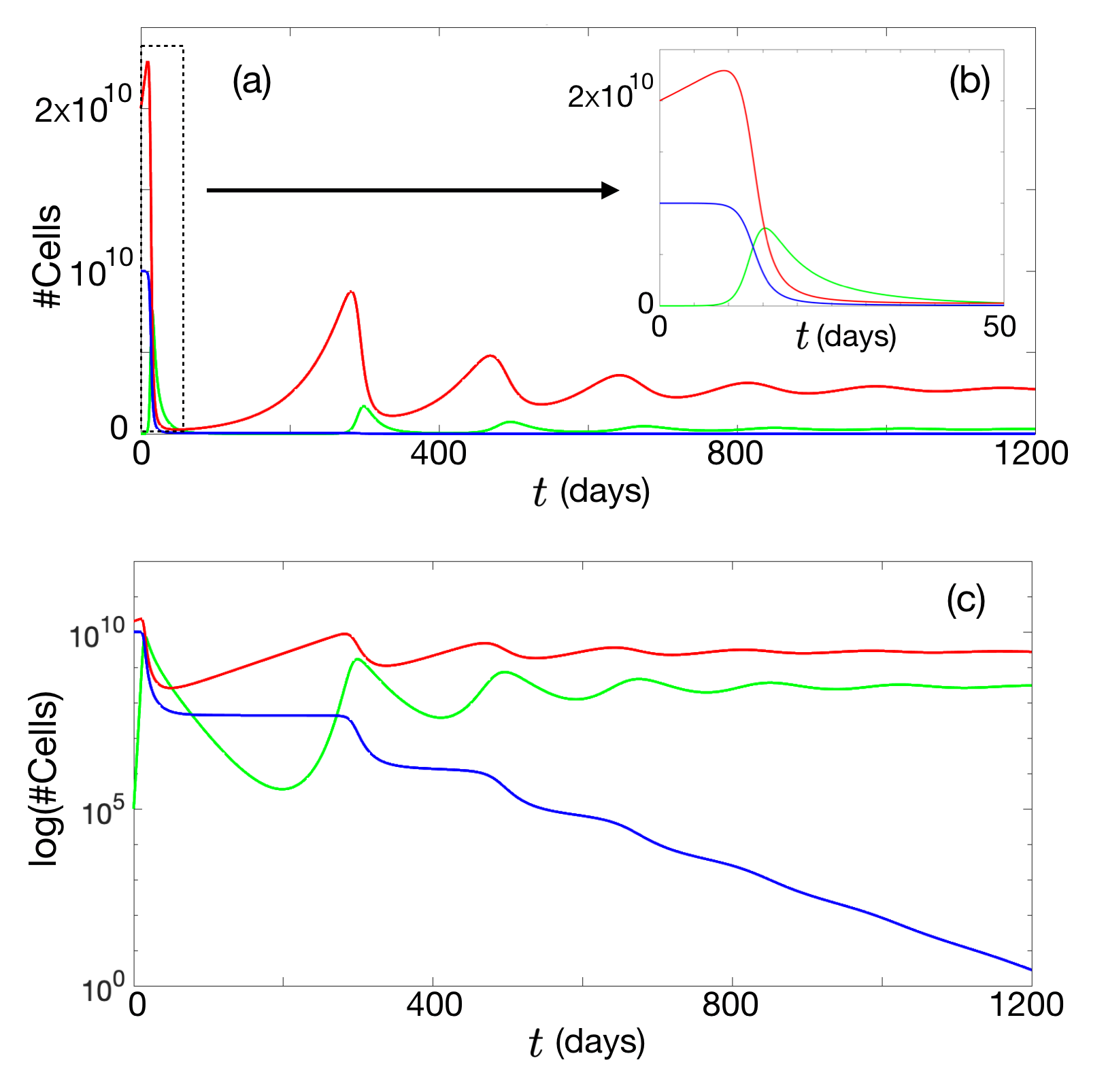}
	\includegraphics[width=0.65\textwidth]{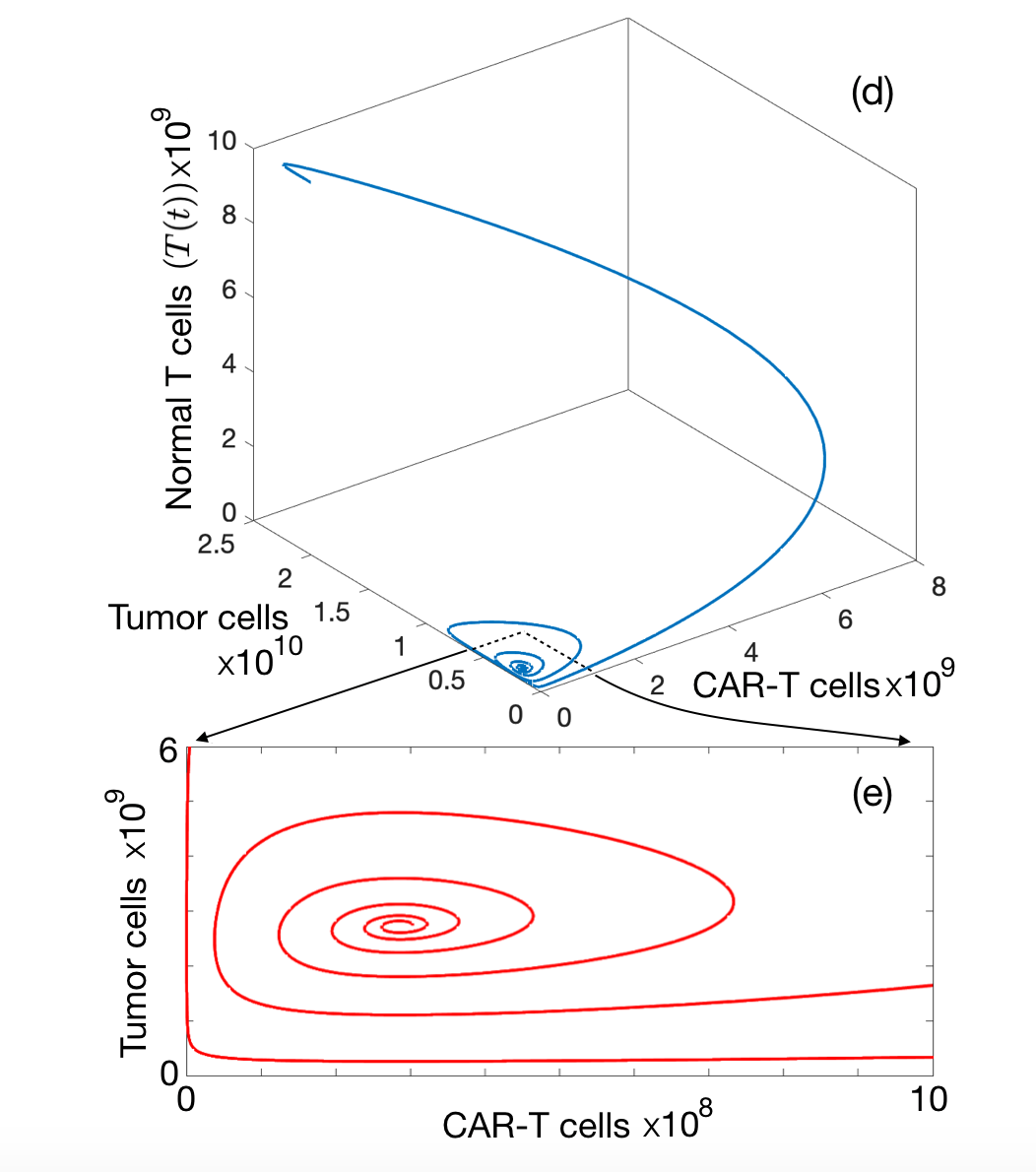}
	\caption{\textbf{Typical dynamics of cell populations governed by Eq. (\ref{model1})}. Results of a simulation are shown for parameter values $\tau_C = 14$ days, $\rho_L = 1/60$ day$^{-1}$, $\alpha = 5.84 \times 10^{-11}$ day$^{-1}$ cell$^{-1}$, $\rho_C = \alpha/2$, and initial data $T_0 = 10^9, L_0 = 2\times 10^9, C_0 = 10^5$ cells. (a-c) Dynamics of the populations of CAR-T (green), tumor (red) and normal T cells (blue). Dynamics are depicted on the time intervals  $t \in[0, 1200]$ (a,c) and $t \in[0, 50]$  (b)  , and in linear  (a,b) and logarithmic (c) scales. (d) Trajectory of the solution in the phase space. (e) Projection of the selected part of the trajectory on the $(T(t),C(t))$ plane.}
\label{fig2}
\end{figure}

\subsection{Higher mitotic stimulation rates improve tumor control}

The asymptotic equilibrium values of leukemic cells and CAR-T cells are given by $E_2$, i.e. 
\begin{subequations}
\begin{eqnarray}\label{L2e}
L_2 & = & \frac{1}{\rho_C} \left( \rho_L + \frac{1}{\tau_C}\right) - \frac{\rho_L\rho_C}{\alpha}, \\
C_2 & = & \frac{\rho_L}{\alpha}. 
\end{eqnarray}
\end{subequations}
Interestingly, the equilibrium level of CAR-T cells does not depend on the mitotic stimulation rate $\rho_C$, but only on the growth and death rates of leukemic cells. 
However, the most important thing, due to the clinical implications, are the leukemia equilibrium levels $L_2$, and the maximum leukemic load $\max_{t} L(t)$. Maxima would typically be attained in time
during the CAR-T cell expansion stage. 

Let us note that
\begin{equation}
\frac{dL_2(\rho_C)}{d\rho_C} =  -\frac{1}{\rho^2_C} \left( \rho_L + \frac{1}{\tau_C}\right) - \frac{\rho_L}{\alpha} < 0,
\end{equation}
this means that $L_2(\rho_C)$ is a monotonically decreasing function. Since $\rho_C \alpha>0$, the minimum of $L_2(\rho_C)$ over the range $\rho_C \in [0, \alpha]$ would be obtained when $\rho_C = \alpha$. Figure \ref{equilibriaL} 
confirms that the asymptotic values of $L_2$ decrease with the mitotic stimulation rate $\rho_C$ and thus larger values of the mitotic stimulation rate lead to better tumor control. However, going beyond $\rho_C = \alpha$ destabilizes the system, as discussed in Sec. \ref{uns}. Thus, it may be necessary to control in detail the CAR-T manufacturing process to get both high mitotic stimulation rates while at the same time not getting too close to the instability regime. 

 \begin{figure}
\centering
	\includegraphics[width=0.9\textwidth]{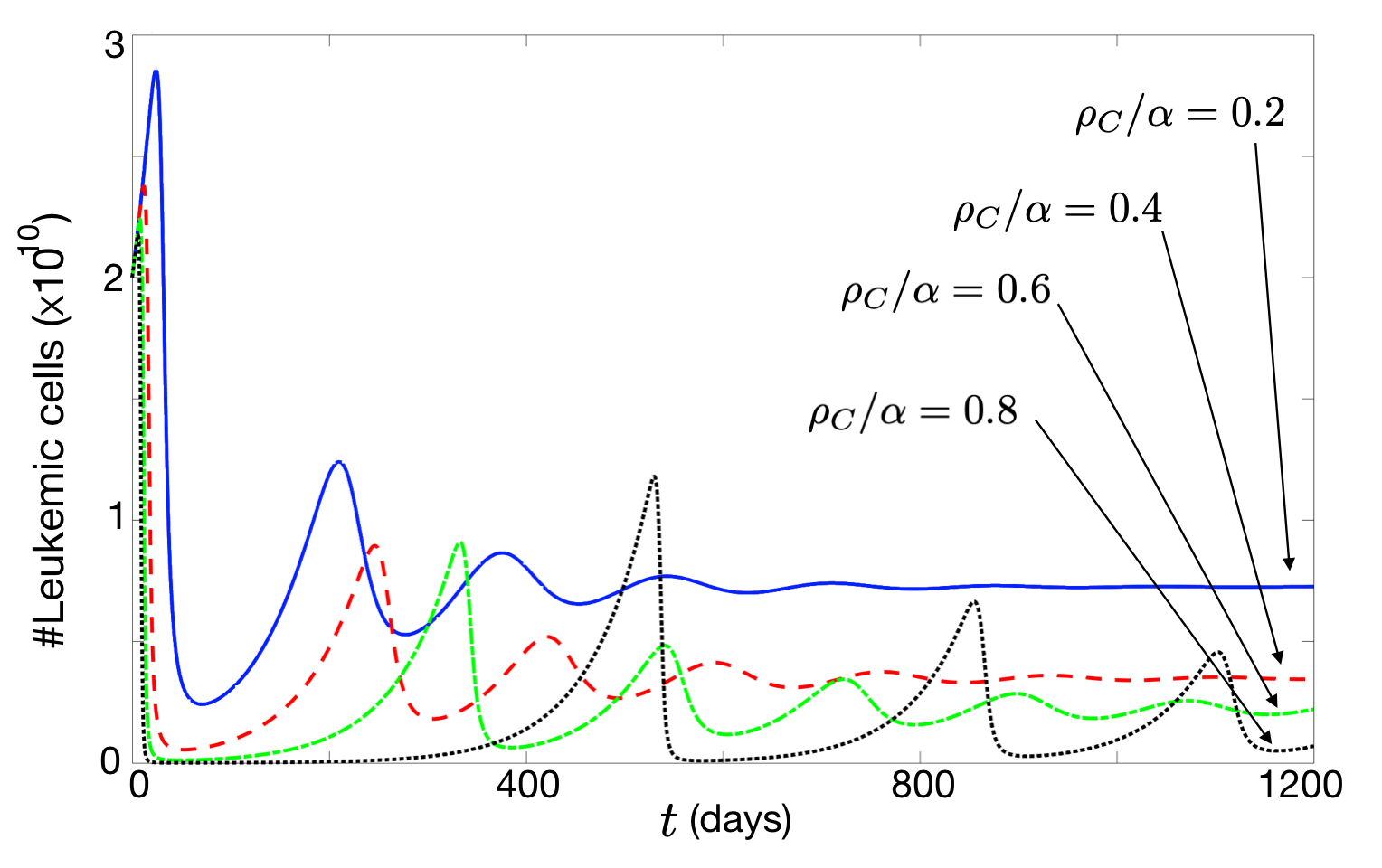}
	\caption{\textbf{Higher mitotic stimulation rates provide better tumor control.} Dynamics of 
	the leukemic population governed by Eq. (\ref{model1})  for initial data $T_0 = 10^9, L_0 = 2\times 10^9, C_0 = 10^5$ cells, and parameter values $\tau_C = 14$ days, $\rho_L = 1/60$ day$^{-1}$, $\alpha = 5.84 \times 10^{-11}$ day$^{-1}$ cell$^{-1}$. The different curves correspond to stimulation rates
	$\rho_C = 0.2 \alpha$ (blue solid line), $\rho_C = 0.4 \alpha$ (red dashed line), $\rho_C = 0.6 \alpha$ (green, dash-dotted line), $\rho_C= 0.8 \alpha$ (black dotted line).}
\label{equilibriaL}
\end{figure}

\subsection{Initial number of CAR-T cells injected does not affect the outcome of therapy}

We next studied the effect of the number of CAR-T cells initially injected on the system's dynamics. To do so, we performed an extensive number of simulations over the biologically feasible range and found a very weak dependence of the dynamics on the number of injected CAR-T cells. An example is shown in Fig. \ref{equilibriaX} for a broad range of cells initially injected ranging from $10^4$ to $10^6$. 
Although there was a difference of two orders of magnitude in $C_0$, it led to a small variation in the time to peak expansion of a few days, a negligible increase of the maximum CAR-T and leukemic cell number, and a minor differences in the times to tumor relapse. 

 \begin{figure}
\centering
	\includegraphics[width=0.9\textwidth]{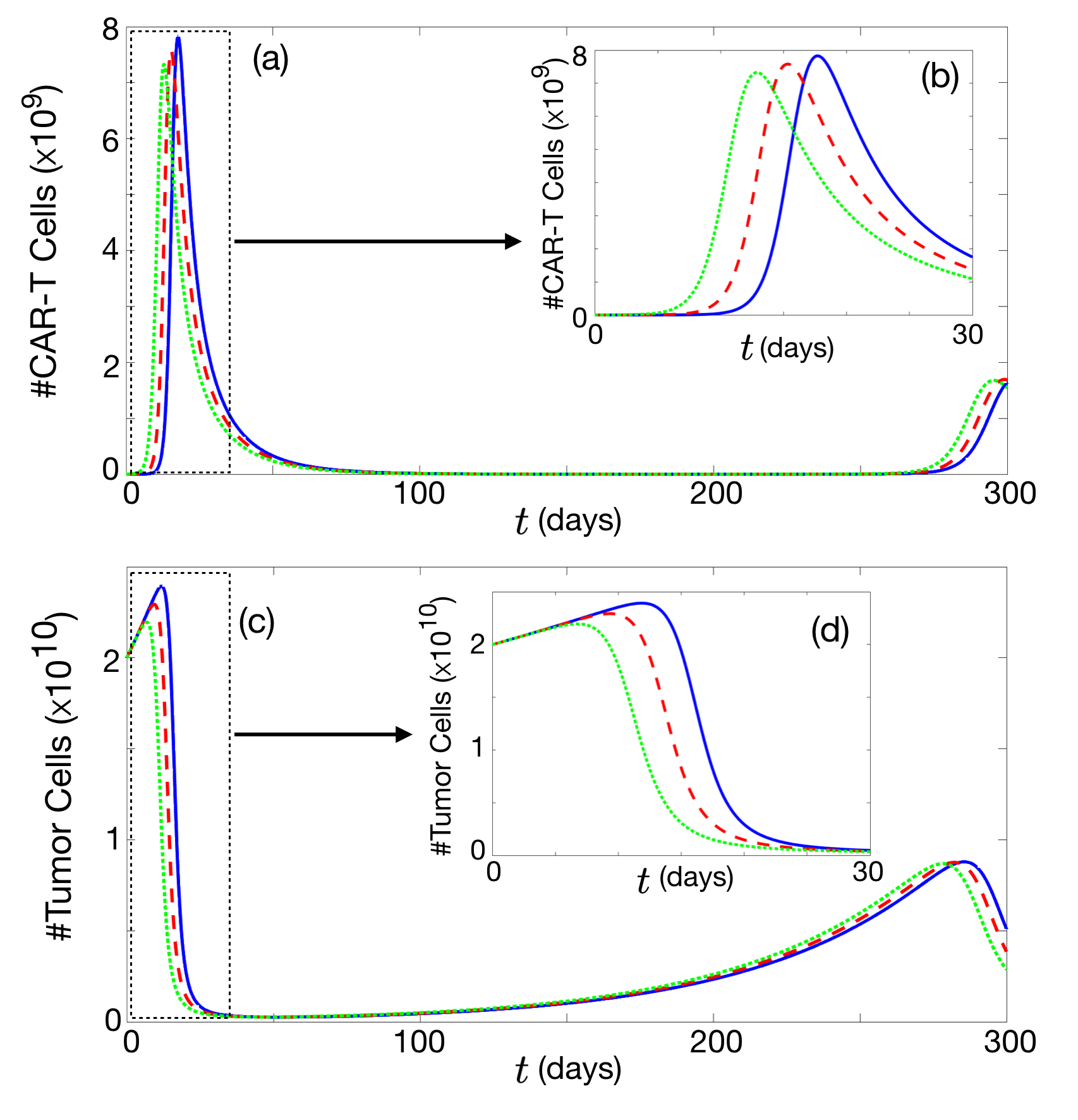}
	\caption{\textbf{Initial number of CAR-T cells injected does not affect the outcome of therapy.} Dynamics of the number of CAR-T cells (a,b) and leukemic cells (c,d) governed by Eq. (\ref{model1}) over the time range [0,300] days (a,c). We also show the details of the initial response of the treatment over the time interval [0,30] (b,d). Initial data used in the simulations were $T_0 = 10^9, L_0 = 2\times 10^9$ cells, and parameter values $\tau_C = 14$ days, $\rho_L = 1/60$ day$^{-1}$, $\alpha = 5.84 \times 10^{-11}$ day$^{-1}$ cell$^{-1}$. The curves correspond to different values of 
	$C_0 = 10^4$ (blue solid line), $C_0 = 10^5$ (red dashed line), $C_0 = 10^6$ (green dash-dotted line).} \label{equilibriaX}
\end{figure}

\subsection{Tumor proliferation rate did not essentially affect either the initial response or the asymptotic leukemic cell values, but did influence relapse time}

Finally, we studied the dynamics under modifications of the tumor proliferation rate in the whole feasible range for fast-growing leukemias $\rho_C \in [1/60,1/20]$. The result of typical simulations are shown in Figure \ref{rhoLL}.

The short term dynamics in response to the CAR-T injection were found to be qualitatively similar independently of the tumor proliferation rate, with a peak tumor cell number around day 10 post-injection (Figure \ref{rhoLL}(b)). Small differences were observed in the peak number of tumor cells obtained, essentially due to the fact that faster growing leukemic cells could grow further during the initial CAR-T expansion phase. After CAR-T cell expansion, there was a response phase for the different tumor growth speeds and then a relapse was observed (Figure \ref{rhoLL}(a)). The relapse time was found to depend on the proliferation rate. This is in line with what one would expect for the amplitude and frequency of oscillations towards the equilibrium point $E_2$, which are both proportional to $\rho_L$ according to Eq. (\ref{equifor}).

As expected from the expression for $E_2$, and the values of the parameters, there was a weak dependence of the number of leukemic cells
in the equilibrium on $\rho_L$  in the range of relevance (Figure \ref{rhoLL}(c)), given analytically by $L_2 = 1.5289 \times 10^9 + 4.2808 \times 10^9 \rho_L$, with $\rho_L \in [1/60,1/20]$ day$^{-1}$. Thus the major contribution to the asymptotic tumor cell count was $L_2 \sim 1/(\rho_C \tau_C)$. 

Although the therapy had a substantial effect, logarithmic scale plots (Figure \ref{rhoLL}d) show the persistence of measurable disease for all times. 

Let us define the maximum tumor cell load reduction achieved by the treatment  as 
\begin{equation}
R = \max_t \left(T(t)\right) / \min_t \left(T(t)\right).
\end{equation} 
In the simulations shown in Figure \ref{rhoLL}, 
this quantity was found to be $R(\rho_L = 1/20) = 65$, $R(\rho_L = 1/30) = 72$, $R(\rho_L = 1/40) = 78$, $R(\rho_L = 1/50) = 82$, $R(\rho_L = 1/60) = 88$, thus always smaller than 100 (two orders of magnitude). It is easy to see that CAR-T cells decreased in number over time, as did the tumor load, but they were always above the numbers of cells initially injected. In fact, for most tumor proliferation rates the number of CAR-T cells was more than one order of magnitude above the level injected.

\begin{figure}
\centering
	\includegraphics[width=0.95\textwidth]{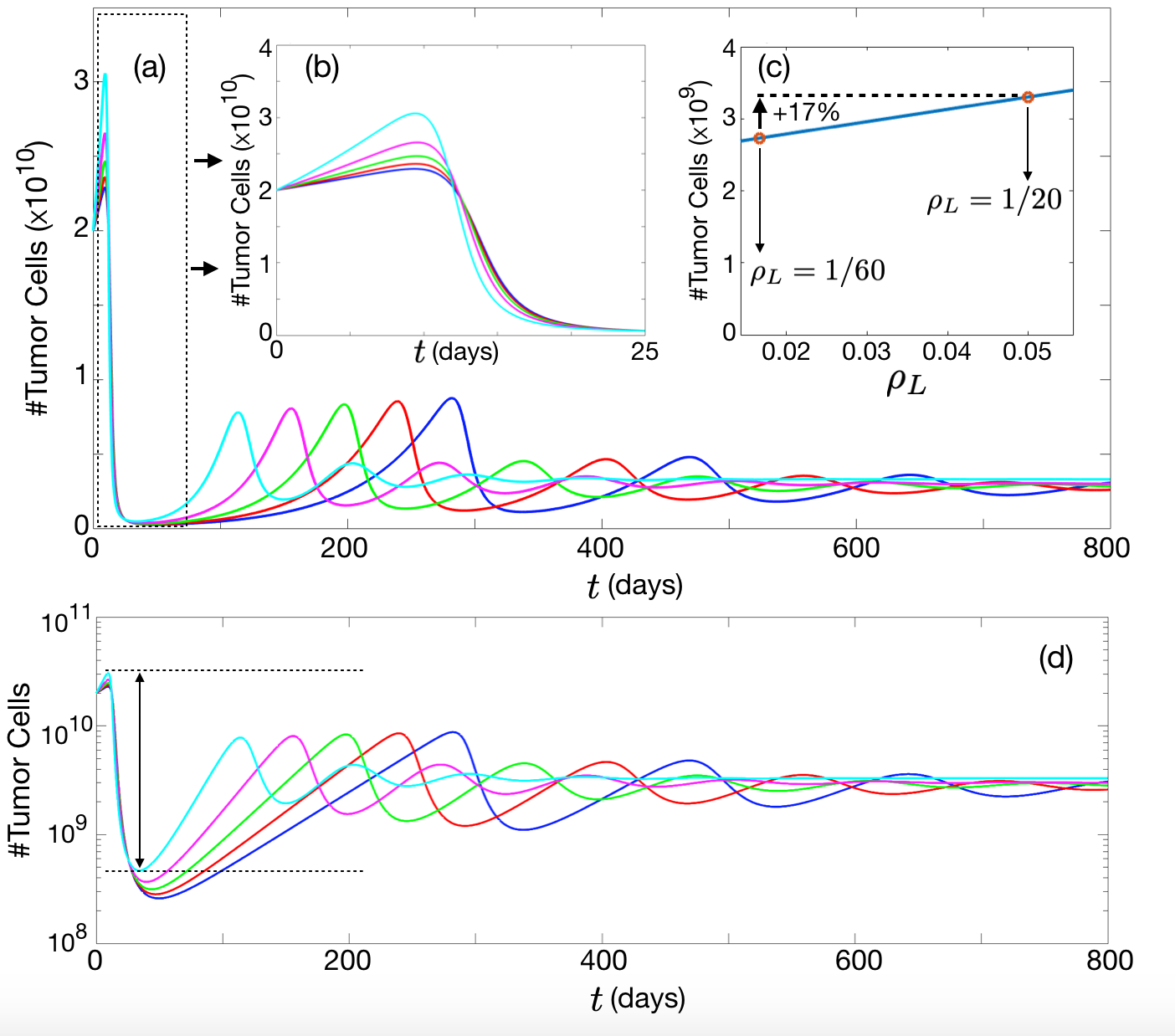}
	\includegraphics[width=0.95\textwidth]{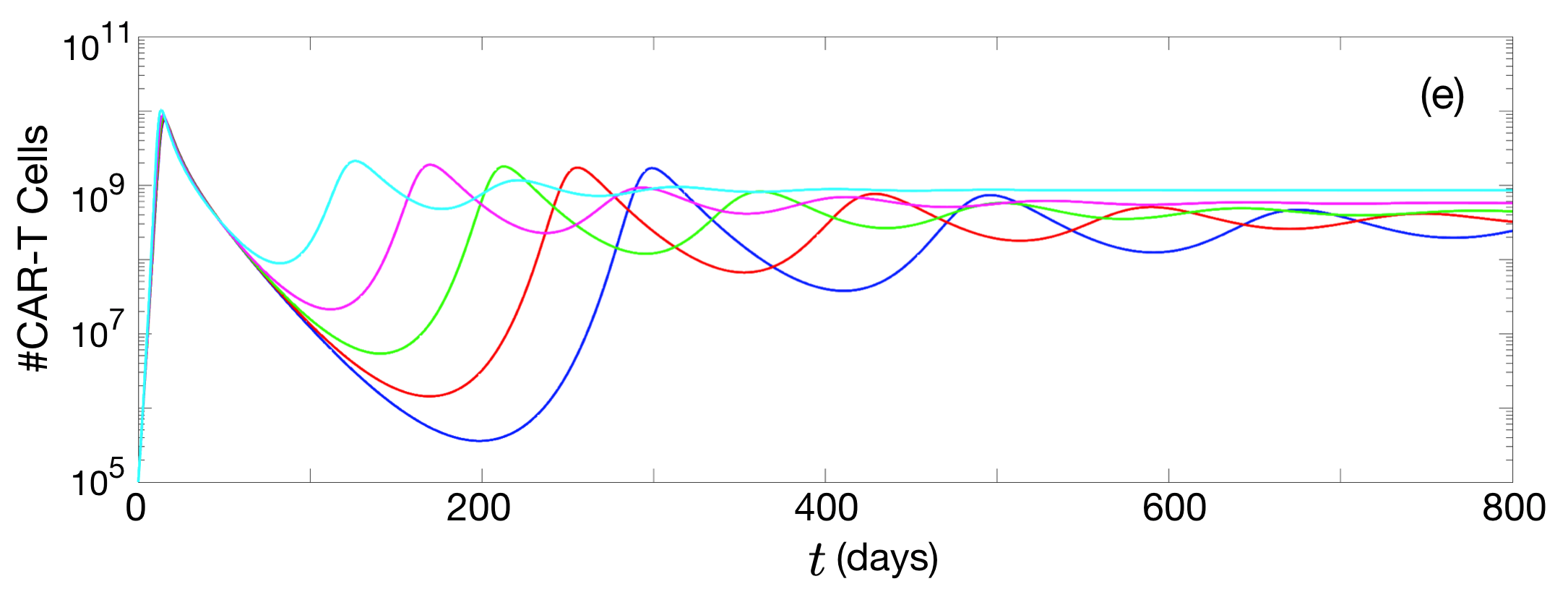}
	\caption{\textbf{Tumor proliferation rate did not affect either the initial response or the asymptotic values, but did influence relapse time.} Dynamics of 
	the leukemic population governed by Eq. (\ref{model1})  for initial data $T_0 = 10^9, L_0 = 2\times 10^9, C_0 = 10^5$ cells, and parameter values $\tau_C = 14$ days,  $\alpha = 5.84 \times 10^{-11}$ day$^{-1}$ cell$^{-1}$, $\rho_C=0.5 \alpha$, for different values of $\rho_L$. The different curves correspond to different values of 
	$\rho_L = 1/60$ (blue line), $\rho_L = 1/50$ (red line), $\rho_L = 1/40$ (green line), $\rho_L = 1/30$ (magenta line), $\rho_L = 1/20$ (cyan line) (a) Dynamics over the time range [0,800] days. (b) Details of the dynamics for the time interval [0,25]. (c) Dependence of the asymptotic tumor values obtained from Eq. (\ref{L2e}). (d) Tumor cell number evolution in logarithmic scale. The rate between maximum and minimum tumor load is indicated with an arrow for the case $\rho_L = 1/20$. (e) Evolution of the number of CAR-T cells.}
\label{rhoLL}
\end{figure}

\subsection{CAR-T cell reinjection does not improve the therapy outcome.}
 
 An interesting question is if one could control relapses by acting on the tumor by reinjecting CAR T cells. 
 To study that we performed extensive numerical simulations over the parameter range of interest. 
 Figure \ref{reinjected} summarizes some results where we 
 simulated the reinjection of $C =10^5$ CAR-T cells at different times post-injection for different tumor growth rates and quantified the variations in maximum tumor load with respect to the case without reinjection.  In the best scenario, corresponding to slowly growing tumors, the improvement in the peak tumor cell number at relapse was around 2\%. Thus, the CAR-T reinjection did not substantially improve the outcome for any delay nor tumor proliferation rate.

\begin{figure}
\centering
\includegraphics[width=0.95\textwidth]{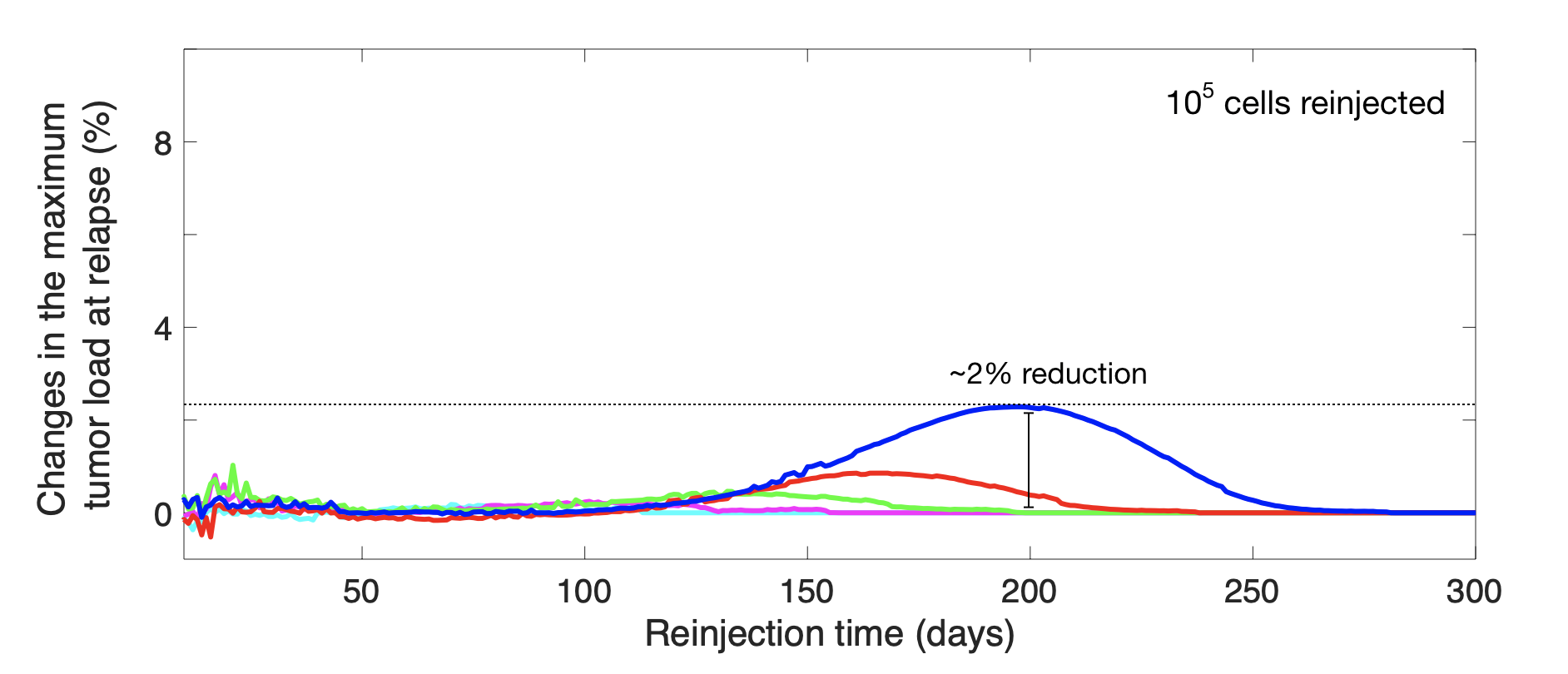}
\caption {\textbf{CAR-T cell reinjection does not improve the therapy outcome.} Variations of the maximum tumor load at the first relapse when reinjecting $C = 10^5$ CAR-T cells for different reinjection times using initial data $T_0 = 10^9, L_0 = 2\times 10^9, C_0 = 10^5$ cells, and parameter values $\tau_C = 14$ days,  $\alpha = 5.84 \times 10^{-11}$ day$^{-1}$ cell$^{-1}$, $\rho_C=0.5 \alpha$. The different curves correspond to different values of the tumor proliferation rate $\rho_L = 1/60$ (blue line), $\rho_L = 1/50$ (red line), $\rho_L = 1/40$ (green line), $\rho_L = 1/30$ (magenta line), $\rho_L = 1/20$ (cyan line). }
\label{reinjected}
\end{figure}

\section{Discussion and Conclusion}
\label{conclu}
 
CAR-T cell therapies for B-cell malignancies is one the most resounding successes of the current immunotherapies, driving a strong interest on the topic \cite{Cell}. Several mathematical models have been constructed describing the observed dynamics \cite{Sahoo,Baar,Kimmel,Rodrigues,Anna,Ode}. This has lead to very considerable interest in extending these therapies to other hematological tumors, such as malignant T-cell leukemias.

As stated in the introduction, one of the challenges faced by these treatments is fratricide, i.e. the fact that CAR-T cells, belonging to the T-cell lineage and expressing common antigens with the leukemic cells, would themselves become targets of the therapy. This poses the very interesting question of what the outcome of such a therapy would be, given that it poses challenges even for CAR-T cell production in vitro. Interestingly, our simple mathematical model captured the difficulties for CAR-T cell expansion in vitro, with a limit in cell production given by Eq. (\ref{expansion}). Thus, the maximum number of CAR-T cells that can be produced in vitro depends on the stimulation provided by the cytokines and the excess CAR-T killing efficiency over the mitotic stimulation.

One might naively think that in-vivo expansion would also be limited, not having a substantial effect on the disease. However, this is not true. Our simulations showed that when they are injected, even in the small numbers that can be obtained in vitro, the CAR-T cells find many targets initially on both the healthy and tumor T-cells. During this initial stage, the CAR-T population is amplified even in the presence of fratricide. We also found in silico that the outcome of the therapy did not depend on the number of CAR-T cells injected. 

A relapse was always observed in the framework of our model simulations and the number of tumor cells was initially reduced by a factor smaller than 100, with the persistence of measurable disease for all times, so the treatment did not eradicate the disease in our numerical simulations. However, CAR-T cells were able to control tumor growth after two weeks and then, even in spite of the oscillations, the high initial tumor loads were never found to appear again..
 
Relapse time after the CAR-T treatment was found to depend strongly on the proliferation rate. This makes us wonder if a combination with a post-CAR chemotherapy could prove useful in delaying tumor regrowth. The opposite strategy, i.e. first giving chemotherapy and then CAR-T before relapse, would not be recommended, however, since chemotherapy would be expected to reduce the number of target cells and then CAR-T would lack the substrate to expand. 
 
 Another question related to these treatments is whether re-challenging with CAR-T cells at any given time could be beneficial. For instance one may wonder if that process could be used to delay or even eliminate tumor relapse. On the basis of our computational results, additional injections were found to have no substantial effect on the dynamics. The reason is that CAR-T cells decreased in number transiently in time  after reaching peak expansion, but their levels were always above the numbers of cells initially injected, typically by more than an order of magnitude. This means that injecting small numbers of CAR-T cells in relation to those already present would not have a substantial effect on the dynamics. 
 
 The fact that the equilibrium $L_2 \sim 1/(\rho_C \tau_C)$ implies that there are two ways to improve the long-term efficacy of CAR-T cell therapy for T-cell leukemias. The first would be to improve the persistence of the CAR-T cells, something that has been done for B-cell leukemias by using CD19 CAR (CAT) with lower affinity than FMC63, the high-affinity binder used in many clinical studies \cite{Ghorashian}. The second would be to improve the mitotic stimulation rate, but keeping in mind the restriction $\alpha > \rho_C$. 
 
 In conclusion, in this paper we have developed a mathematical model of the dynamics of leukemic cells, healthy T-cells and CAR-T cells, after the therapeutic injection of the latter population. The mathematical model showed the potential of the treatment to control, but not eradicate, the disease. This would result in a chronification of the disease that could last for a long time, or it could buy some time to try alternative therapeutic strategies. Our work is a first simple mathematical attempt to cast light on the potential outcomes of these treatments. There are different types of T-cell malignancies and specifically T-cell leukemias, and the particular features of each type could be incorporated into more detailed models including additional biological details. We hope our work will stimulate further work in this exciting sub-field of immunotherapy.
 
\section*{Acknowledgements}

This work has been partially supported by the Junta de Comunidades de Castilla-La Mancha (grant number SBPLY/17/180501/000154), the James S. Mc. Donnell Foundation (USA) 21st Century Science Initiative in Mathematical and Complex Systems Approaches for Brain Cancer (Collaborative award 220020450), Junta de Andalucía group FQM-201, Fundación Española para la Ciencia y la Tecnología (FECYT, project PR214 from the University of Cádiz) and the Asociación Pablo Ugarte (APU). OLT is supported by a PhD Fellowship from the University of Castilla-La Mancha research plan.

We would like to acknowledge Gabriel F. Calvo, Carmen Ortega-Sabater, Juan Belmonte Beitia (MOLAB, University of Castilla-La Mancha, Spain), Manuel Ram\'{\i}rez-Orellana (Hospital Universitario Ni\~no Jes\'us, Madrid, Spain) and  Soukaina Sabir (University Mohamed V, Morocco) for discussions.

\newpage

   \end{document}